\font\twlgot =eufm10 scaled \magstep1 \font\egtgot =eufm8
\font\sevgot =eufm7
\font\twlmsb =msbm10 scaled \magstep1 \font\egtmsb =msbm8
\font\sevmsb =msbm7
\def\Bbb{\protect\pBbb}
\def\pBbb{\relax\ifmmode\expandafter\Bb\else\typeout{You cann't use
Bbb in text mode}\fi}
\def\Bb #1{{\fam\msbfam\relax#1}}
\def\op#1{\mathop{{\it\fam0} #1}\limits}
\newcommand{\beq}{\begin{equation}}
\newcommand{\eeq}{\end{equation}}
\newcommand{\ben}{\begin{eqnarray}}
\newcommand{\een}{\end{eqnarray}}
\newcommand{\be}{\begin{eqnarray*}}
\newcommand{\ee}{\end{eqnarray*}}
\newcommand{\cA}{{\cal A}}
\newcommand{\cL}{{\cal L}}
\newcommand{\cE}{{\cal E}}
\newcommand{\al}{\alpha}
\newcommand{\bt}{\beta}
\newcommand{\dl}{\delta}
\newcommand{\la}{\lambda}
\newcommand{\F}{\Phi}
\newcommand{\p}{\pi}
\newcommand{\m}{\mu}
\newcommand{\si}{\sigma}
\newcommand{\Si}{\Sigma}
\newcommand{\w}{\wedge}
\newcommand{\wh}{\widehat}
\newcommand{\ol}{\overline}
\newcommand{\dr}{\partial}
\newcommand{\ot}{\otimes}
\newcounter{eqalph}
\newcounter{equationa}
\newcounter{theorem}
\newcounter{remark}
\newcounter{example}
\newcounter{proposition}
\newcounter{lemma}
\newcounter{corollary}
\newcounter{definition}
\def\theremark{\arabic{remark}}
\def\thedefinition{\arabic{definition}}
\newenvironment{proof}{\noindent
{\bf Proof.}}{\hfill $\Box$ \medskip}
\newenvironment{ex}{\refstepcounter{remark}\medskip\noindent{\bf Example
\theremark.}}{\medskip}
\newenvironment{theo}{\refstepcounter{definition} \medskip
\noindent{\bf Theorem \thedefinition.}}{\medskip}
\newenvironment{lem}{\refstepcounter{definition} \medskip
\noindent{\bf Lemma \thedefinition.}}{\medskip}
\newcommand{\mar}[1]{}
\begin{document}

\hbox{}

\begin{center}

{\large \bf Lagrangian dynamics of submanifolds. Relativistic
mechanics}

\bigskip
\bigskip

{\sc G. Sardanashvily}
\bigskip

{\it Department of Theoretical Physics, Moscow State University,
Moscow, Russia}

\end{center}

\bigskip
\bigskip

\begin{small}

\noindent Geometric formulation of Lagrangian relativistic
mechanics in the terms of jets of one-dimensional submanifolds is
generalized to Lagrangian theory of submanifolds of arbitrary
dimension.


\end{small}


\bigskip
\bigskip

\section{Introduction}

Classical non-relativistic mechanics is adequately formulated as
Lagrangian and Hamiltonian theory on a fibre bundle $Q\to\Bbb R$
over the time axis $\Bbb R$, where $\Bbb R$ is provided with the
Cartesian coordinate $t$ possessing the transition functions
$t'=t+$const. \cite{eche,book10,leon,book98,sard98}. A velocity
space of non-relativistic mechanics is the first order jet
manifold $J^1Q$ of sections of $Q\to \Bbb R$. Lagrangians of
non-relativistic mechanics are defined as densities on $J^1Q$.
This formulation is extended to time-reparametrized
non-relativistic mechanics subject to time-dependent
transformations which are bundle automorphisms of $Q\to\Bbb R$
\cite{book10,book98}.

Thus, one can think of non-relativistic mechanics as being
particular classical field theory on fibre bundles over $X=\Bbb
R$. However, an essential difference between non-relativistic
mechanics and field theory on fibre bundles $Y\to X$, $\dim X>1$,
lies in the fact that connections on $Q\to\Bbb R$ always are flat.
Therefore, they fail to be dynamic variables, but characterize
non-relativistic reference frames.

In comparison with non-relativistic mechanics, relativistic
mechanics admits transformations of the time depending on other
variables, e.g., the Lorentz transformations in Special Relativity
on a Minkowski space $Q=\Bbb R^4$. Therefore, a configuration
space $Q$ of relativistic mechanics has no preferable fibration
$Q\to\Bbb R$, and its velocity space is the first order jet
manifold $J^1_1Q$ of one-dimensional submanifolds of a
configuration space $Q$ \cite{book10,book98,sard10}. Fibres of the
jet bundle $J^1_1Q\to Q$ are projective spaces, and one can think
of them as being spaces of the three-velocities of a relativistic
system. The four-velocities of a relativistic system are
represented by elements of the tangent bundle $TQ$ of a
configuration space $Q$.

This work is devoted to generalization of the above mentioned
formulation of relativistic mechanics to the case of submanifolds
of arbitrary dimension.

Let us consider $n$-dimensional submanifolds of an $m$-dimensional
smooth real manifold $Z$. The notion of jets of submanifolds
\cite{book09,kras,modu} generalizes that of jets of sections of
fibre bundles, which are particular jets of submanifolds (Section
2). Namely, a space of jets of submanifolds admits a cover by
charts of jets of sections. Just as in relativistic mechanics, we
restrict our consideration to first order jets of submanifolds
which form a smooth manifold $J^1_nZ$. One however meets a problem
how to develop Lagrangian formalism on a manifold $J^1_nZ$ because
it is not a fibre bundle.

For this purpose, we associate to $n$-dimensional submanifolds of
$Z$ the sections of a trivial fibre bundle
\mar{str}\beq
\pi:Z_\Si=\Si\times Z\to \Si, \label{str}
\eeq
where $\Si$ is some $n$-dimensional manifold. We obtain a relation
between the elements of $J^1_nZ$ and the jets of sections of the
fibre bundle (\ref{str})(Section 3). This relation fails to be
one-to-one correspondence. The ambiguity contains, e.g.,
diffeomorphisms of $\Si$. Then Lagrangian formalism on a fibre
bundle $Z_\Si\to \Si$ is developed in a standard way, but a
Lagrangian is required to possess the gauge symmetry (\ref{s59})
which leads to the rather restrictive Noether identities
(\ref{s60}) (Section 4).

If $n=2$, this is the case, e.g., of the Nambu--Goto Lagrangian
(\ref{s140}) of classical string theory (Example \ref{sss}).

If $n=1$, solving these Noether identities, we obtain a generic
Lagrangian (\ref{kk1}) of relativistic mechanics (Section 5).

These examples confirm the correctness of our description of
Lagrangian dynamics of submanifolds of a manifold $Z$ as that of
sections of the fibre bundle $Z_\Si$ (\ref{str}).

\section{Jets of submanifolds}

Given an $m$-dimensional smooth real manifold $Z$, a $k$-order jet
of $n$-dimensional submanifolds of $Z$ at a point $z\in Z$ is
defined as an equivalence class $j^k_zS$ of $n$-dimensional
imbedded submanifolds of $Z$ through $z$ which are tangent to each
other at $z$ with order $k>0$. Namely, two submanifolds
\be
i_S: S\to Z,\qquad  i_{S'}: S'\to Z
\ee
through a point $z\in Z$ belong to the same equivalence class
$j^k_zS$ if and only if the images of the $k$-tangent morphisms
\be
T^ki_S: T^k_zS\to T^k_zZ, \qquad  T^ki_{S'}: T^k_zS'\to T^k_zZ
\ee
coincide with each other. The set
\be
J^k_nZ=\op\bigcup_{z\in Z} j^k_zS
\ee
of $k$-order jets of submanifolds is a finite-dimensional real
smooth manifold, called the $k$-order jet manifold of submanifolds
\cite{book09,kras,modu}.

Let $Y\to X$ be an $m$-dimensional fibre bundle over an
$n$-dimensional base $X$ and $J^kY$ the $k$-order jet manifold of
sections of $Y\to X$. Given an imbedding $\Phi:Y\to Z$, there is
the natural injection
\be
J^k\Phi: J^kY\to J^k_nZ, \qquad j^k_xs \to [\Phi\circ
s]^k_{\Phi(s(x))},
\ee
where $s$ are sections of $Y\to X$. This injection defines a chart
on $J^k_nZ$. These charts provide a manifold atlas of $J^k_nZ$.

Let us restrict our consideration to first order jets of
submanifolds. There is obvious one-to-one correspondence
\mar{s10}\beq
\zeta: j^1_zS \to V_{j^1_zS}\subset T_zZ  \label{s10}
\eeq
between the jets $j^1_zS$ at a point $z\in Z$ and the
$n$-dimensional vector subspaces of the tangent space $T_zZ$ of
$Z$ at $z$. It follows that $J^1_nZ$ is a fibre bundle
\mar{s3}\beq
\rho:J^1_nZ\to Z \label{s3}
\eeq
with the structure group $GL(n,m-n;\Bbb R)$ of linear
transformations of a vector space $\Bbb R^m$ which preserve its
subspace $\Bbb R^n$. The typical fibre of the fibre bundle
(\ref{s3}) is a Grassmann manifold $GL(m;\Bbb R)/GL(n,m-n;\Bbb
R)$. This fibre bundle is endowed with the following coordinate
atlas.

Let $\{(U;z^A)\}$ be a coordinate atlas of $Z$. Let us provide $Z$
with an atlas obtained by replacing every chart $(U,z^A)$ of $Z$
with the
\be
{m\choose n}=\frac{m!}{n!(m-n)!}
\ee
charts on $U$ which correspond to different partitions of $(z^A)$
in collections of $n$ and $m-n$ coordinates
\mar{5.8}\beq
(U; x^a,y^i), \qquad a=1,\ldots,n,  \qquad
i=1,\ldots,m-n.\label{5.8}
\eeq
The transition functions between the coordinate charts (\ref{5.8})
associated with a coordinate chart $(U,z^A)$ are reduced to
exchange between coordinates $x^a$ and $y^i$. Transition functions
between arbitrary coordinate charts (\ref{5.8}) take the form
\mar{5.26} \beq
x'^a = x'^a (x^b, y^k), \qquad y'^i = y'^i (x^b, y^k).
\label{5.26}
\eeq
Let $J^0_nZ$ denote a manifold $Z$ provided with the coordinate
atlas (\ref{5.8})  -- (\ref{5.26}).

Given this atlas of $J^0_nZ=Z$, the first order jet manifold
$J^1_nZ$ is endowed with the coordinate charts
\mar{5.31}\beq
(\rho^{-1}(U)=U\times\Bbb R^{(m-n)n}; x^a,y^i,y^i_a), \label{5.31}
\eeq
possessing the following transition functions. With respect to the
coordinates (\ref{5.31}) on the jet manifold $J^1_nZ$ and the
induced fibre coordinates $(\dot x^a, \dot y^i)$ on the tangent
bundle $TZ$, the above mentioned correspondence $\zeta$
(\ref{s10}) reads
\be
\zeta: (y^i_a) \to \dot x^a(\dr_a +y^i_a(j^1_zS)\dr_i).
\ee
It implies the relations
\mar{s0,1}\ben
&&  y'^j_a= (\frac{\dr y'^j}{\dr y^k} y^k_b + \frac{\dr y'^j}{\dr x^b})
(\frac{\dr x^b}{\dr y'^i}y'^i_a + \frac{\dr x^b}{\dr x'^a}), \label{s0}\\
&& (\frac{\dr x^b}{\dr y'^i}y'^i_a + \frac{\dr x^b}{\dr x'^a})
(\frac{\dr x'^c}{\dr y^k} y^k_b + \frac{\dr x'^c}{\dr
x^b})=\dl^c_a,\label{s1}
\een
which jet coordinates $y^i_a$ must satisfy under coordinate
transformations (\ref{5.26}). Let us consider a non-degenerate
$n\times n$ matrix $M$ with the entries
\be
 M^c_b=(\frac{\dr x'^c}{\dr
y^k}y^k_b + \frac{\dr x'^c}{\dr x^b}).
\ee
Then the relations (\ref{s1}) lead to the equalities
\be
(\frac{\dr x^b}{\dr y'^i} y'^i_a + \frac{\dr x^b}{\dr x'^a})=
(M^{-1})^b_a.
\ee
Hence, we obtain the transformation law of first order jet
coordinates
\mar{s2}\beq
 y'^j_a=
( \frac{\dr y'^j}{\dr y^k} y^k_b+ \frac{\dr y'^j}{\dr x^b})
(M^{-1})^b_a. \label{s2}
\eeq
In particular, if coordinate transition functions $x'^a$
(\ref{5.26}) are independent of coordinates $y^k$, the
transformation law (\ref{s2}) comes to the familiar
transformations of jets of sections.

\section{The fibre bundle $Z_\Si$}

Given a coordinate chart (\ref{5.31}) of $J^1_nZ$, one can regard
$\rho^{-1}(U)\subset J^1_nZ$ as the first order jet manifold
$J^1U$ of sections of a fibre bundle
\be
\chi:U\ni (x^a,y^i)\to (x^a)\in \chi(U).
\ee
The graded differential algebra of exterior forms on
$\rho^{-1}(U)$ is generated by horizontal forms $dx^a$ and contact
forms $dy^i-y^i_adx^a$. Coordinate transformations (\ref{5.26})
and (\ref{s2}) preserve the ideal of contact forms, but horizontal
forms are not transformed into horizontal forms, unless coordinate
transition functions $x'^a$ (\ref{5.26}) are independent of
coordinates $y^k$. Therefore, one can develop first order
Lagrangian formalism with a Lagrangian $L=\cL d^nx$ on a
coordinate chart $\rho^{-1}(U)$, but this Lagrangian fails to be
globally defined on $J^1_nZ$.

In order to overcome this difficulty, let us consider the trivial
fibre bundle $Z_\Si\to \Si$ (\ref{str}) whose trivialization
throughout holds fixed. This fibre bundle is provided with an
atlas of coordinate charts
\mar{s20'}\beq
(U_\Si\times U; \si^\m,x^a,y^i), \label{s20'}
\eeq
where $(U; x^a,y^i)$ are the above mentioned coordinate charts
$(\ref{5.8})$ of a manifold $J^0_n Z$. The coordinate charts
(\ref{s20'}) possess transition functions
\mar{s21}\beq
\si'^\m=\si^\m(\si^\nu), \qquad x'^a = x'^a (x^b, y^k), \qquad
y'^i = y'^i (x^b, y^k). \label{s21}
\eeq
Let $J^1Z_\Si$ be the first order jet manifold of the fibre bundle
(\ref{str}). Since the trivialization (\ref{str}) is fixed, it is
a vector bundle
\be
\pi^1:J^1Z_\Si\to Z_\Si
\ee
isomorphic to the tensor product
\mar{s30}\beq
J^1Z_\Si= T^*\Si\op\ot_{\Si\times Z} TZ \label{s30}
\eeq
of the cotangent bundle $T^*\Si$ of $\Si$ and the tangent bundle
$TZ$ of $Z$ over $Z_\Si$.

Given the coordinate atlas (\ref{s20'}) - (\ref{s21}) of $Z_\Si$,
the jet manifold  $J^1Z_\Si$ is endowed with the coordinate charts
\mar{s14}\beq
 ((\pi^1)^{-1}(U_\Si\times U)=U_\Si\times U\times\Bbb R^{mn};
\si^\m,x^a,y^i,x^a_\m, y^i_\m), \label{s14}
\eeq
possessing transition functions
\mar{s16'}\beq
x'^a_\m=(\frac{\dr x'^a}{\dr y^k}y^k_\nu + \frac{\dr x'^a}{\dr
x^b}x^b_\nu )\frac{\dr \si^\nu}{\dr \si'^\m}, \quad
y'^i_\m=(\frac{\dr y'^i}{\dr y^k}y^k_\nu + \frac{\dr y'^i}{\dr
x^b}x^b_\nu)\frac{\dr \si^\nu}{\dr \si'^\m}. \label{s16'}
\eeq
Relative to the coordinates (\ref{s14}), the bundle isomorphism
(\ref{s30}) takes the form
\be
(x^a_\m, y^i_\m) \to d\si^\m\ot(x^a_\m \dr_a + y^i_\m \dr_i).
\ee

Obviously, a jet $(\si^\m,x^a,y^i,x^a_\m, y^i_\m)$ of sections of
the fibre bundle (\ref{str}) defines some jet of $n$-dimensional
submanifolds of a manifold $\{\si\}\times Z$ through a point
$(x^a,y^i)\in Z$ if an $m\times n$ matrix with the entries
$(x^a_\m, y^i_\m)$ is of maximal rank $n$. This property is
preserved under the coordinate transformations (\ref{s16'}). An
element of $J^1Z_\Si$ is called regular if it possesses this
property. Regular elements constitute an open subbundle of the jet
bundle $J^1Z_\Si\to Z_\Si$.

Since regular elements of $J^1Z_\Si$ characterize first jets of
$n$-dimensional submanifolds of $Z$, one hopes to describe the
dynamics of these submanifolds of a manifold $Z$ as that of
sections of the fibre bundle (\ref{str}).  For this purpose, let
us refine the relation between elements of the jet manifolds
$J^1_nZ$ and $J^1Z_\Si$.

Let us consider the manifold product $\Si\times J^1_nZ$. It is a
fibre bundle over $Z_\Si$.  Given the coordinate atlas
(\ref{s20'}) - (\ref{s21}) of $Z_\Si$, this product is endowed
with the coordinate charts
\mar{s13}\beq
(U_\Si\times \rho^{-1}(U)=U_\Si\times U\times\Bbb R^{(m-n)n};
\si^\m,x^a,y^i, y^i_a), \label{s13}
\eeq
possessing the transition functions (\ref{s2}). Let us assign to
an element $(\si^\m,x^a,y^i, y^i_a)$ of the chart (\ref{s13}) the
elements $(\si^\m,x^a,y^i,x^a_\m, y^i_\m)$ of the chart
(\ref{s14}) whose coordinates obey the relations
\mar{s17}\beq
y^i_a x^a_\m = y^i_\m. \label{s17}
\eeq
These elements make up an $n^2$-dimensional vector space. The
relations (\ref{s17}) are maintained under  the coordinate
transformations (\ref{s21}) and the induced transformations of the
charts (\ref{s14}) and (\ref{s13}) as follows:
\be
&& y'^i_a x'^a_\m =
( \frac{\dr y'^i}{\dr y^k} y^k_c+ \frac{\dr y'^i}{\dr x^c})
(M^{-1})^c_a (\frac{\dr x'^a}{\dr y^k}y^k_\nu + \frac{\dr
x'^a}{\dr x^b}x^b_\nu)\frac{\dr \si^\nu}{\dr \si'^\m}
= \\
&& \qquad ( \frac{\dr y'^i}{\dr y^k} y^k_c+
\frac{\dr y'^i}{\dr x^c}) (M^{-1})^c_a (\frac{\dr x'^a}{\dr
y^k}y^k_b + \frac{\dr x'^a}{\dr x^b} )x^b_\nu\frac{\dr
\si^\nu}{\dr \si'^\m}
=\\
&& \qquad (\frac{\dr y'^i}{\dr
y^k}y^k_b + \frac{\dr y'^i}{\dr x^b})x^b_\nu\frac{\dr \si^\nu}{\dr
\si'^\m}= (\frac{\dr y'^i}{\dr y^k}y^k_\nu + \frac{\dr y'^i}{\dr
x^b}x^b_\nu)\frac{\dr \si^\nu}{\dr \si'^\m}= y'^i_\m.
\ee
Thus, one can associate:
\be
 \zeta': (\si^\m,x^a,y^i, y^i_a) \to
\{(\si^\m,x^a,y^i,x^a_\m, y^i_\m) \, | \, y^i_a x^a_\m = y^i_\m\},
\ee
to each element of a manifold $\Si\times J^1_nZ$ an
$n^2$-dimensional vector space in a jet manifold $J^1Z_\Si$. This
is a subspace of elements
\be
x^a_\m d\si^\m\ot(\dr_a + y^i_a\dr_i)
\ee
of a fibre of the tensor bundle (\ref{s30}) at a point
$(\si^\m,x^a,y^i)$. This subspace always contains regular
elements, e.g., whose coordinates $x^a_\m$ form a non-degenerate
$n\times n$ matrix.

Conversely, given a regular element $j^1_zs$ of $J^1Z_\Si$, there
is a coordinate chart (\ref{s14}) such that coordinates $x^a_\m$
of $j^1_zs$  constitute a non-degenerate matrix, and $j^1_zs$
defines a unique element of $\Si\times J^1_nZ$ by the relations
\mar{s31}\beq
y^i_a=y^i_\m(x^{-1})^\m_a. \label{s31}
\eeq

Thus, we have shown the following. Let $(\si^\m,z^A)$ further be
arbitrary coordinates on the product $Z_\Si$ (\ref{str}) and
$(\si^\m,z^A,z^A_\m)$ the corresponding coordinates on the jet
manifold $J^1Z_\Si$. In these coordinates, an element of
$J^1Z_\Si$ is regular if an $m\times n$ matrix with the entries
$z^A_\m$ is of maximal rank $n$.

\begin{theo} \label{s50} \mar{s50}
(i) Any jet of submanifolds through a point $z\in Z$ defines some
(but not unique) jet of sections of a fibre bundle $Z_\Si$
(\ref{str}) through a point $\si\times z$ for any $\si\in \Si$ in
accordance with the relations (\ref{s17}).

(ii)  Any regular element of $J^1Z_\Si$ defines a unique element
of a jet manifold $J^1_nZ$ by means of the relations (\ref{s31}).
However, non-regular elements of $J^1Z_\Si$ can correspond to
different jets of submanifolds.

(iii) Two elements $(\si^\m,z^A,z^A_\m)$ and
$(\si^\m,z^A,z'^A_\m)$ of $J^1Z_\Si$ correspond to the same jet of
submanifolds if
\be
z'^A_\m= M^\nu_\mu z^A_\nu,
\ee
where $M$ is some matrix, e.g., it comes from a diffeomorphism of
$\Si$.
\end{theo}

\section{Lagrangian formalism}

Based on Theorem \ref{s50}, we can describe the dynamics of
$n$-dimensional submanifolds of a manifold $Z$ as that of sections
of the fibre bundle $Z_\Si$ (\ref{str}) for some $n$-dimensional
manifold $\Si$.

Let
\mar{zzz}\beq
L=\cL(z^A, z^A_\m) d^n\si, \label{zzz}
\eeq
be a first order Lagrangian on a jet manifold $J^1Z_\Si$. The
corresponding Euler--Lagrange operator reads
\mar{kkk}\beq
\dl L= \cE_A dz^A\w d^n\si, \qquad \cE_A= \dr_A\cL - d_\m
\dr_A^\m\cL. \label{kkk}
\eeq
It yields the Euler--Lagrange equations
\be
\cE_A= \dr_A\cL - d_\m \dr_A^\m\cL =0.
\ee

In view of Theorem \ref{s50}, it seems reasonable to require that,
in order to describe jets of $n$-dimensional submanifolds of $Z$,
the Lagrangian $L$ (\ref{zzz}) on $J^1Z_\Si$ must be invariant
under diffeomorphisms of a manifold $\Si$. To formulate this
condition, it is sufficient to consider infinitesimal generators
of one-parameter subgroups of these diffeomorphisms which are
vector fields $u=u^\m\dr_\m$ on $\Si$. Since $Z_\Si\to \Si$ is a
trivial bundle, such a vector field gives rise to a vector field
$u=u^\m\dr_\m$ on $Z_\Si$. Its jet prolongation onto $J^1Z_\Si$
reads
\mar{s59}\ben
&& J^1u= u^\m \dr_\m - z^A_\nu\dr_\m u^\nu \dr_A^\m= u^\m d_\m +[-
u^\nu z^A_\nu\dr_A - d_\m (u^\nu z^A_\nu)\dr_A^\m], \label{s59}\\
&& d_\m=\dr_\m +z^A_\m\dr_A + z^A_{\m\nu}\dr^\nu_\m+\cdots.
\nonumber
\een
One can regard it as a generalized vector field on $J^1Z_\Si$
depending on parameter functions $u^\m(\si^\nu)$, i.e., it is a
gauge transformation \cite{jmp09,book09}. Let us require that
$J^1u$ (\ref{s59}) or, equivalently, its vertical part
\be
u_V= - u^\nu z^A_\nu\dr_A - d_\m(u^\nu z^A_\nu)\dr_A^\m.
\ee
is a variational symmetry of the Lagrangian $L$ (\ref{zzz}). Then
by virtue of the second Noether theorem, the Euler--Lagrange
operator $\dl L$ (\ref{kkk}) obeys the irreducible Noether
identities
\mar{s60}\beq
z^A_\nu\cE_A=0. \label{s60}
\eeq

One can think of these identities as being a condition which the
Lagrangian $L$ on $J^1Z_\Si$ must satisfy in order to be a
Lagrangian of submanifolds of $Z$. It is readily observed that
this condition is rather restrictive.

\begin{ex} \mar{sss} \label{sss}
Let $Z$ be a locally affine manifold, i.e., a toroidal cylinder
$\Bbb R^{m-k}\times T^k$. Its tangent bundle can be provided with
a constant non-degenerate fibre metric $\eta_{AB}$. Let $\Si$ be a
two-dimensional manifold. Let us consider the $2\times 2$ matrix
with the entries
\be
h_{\m\nu}=\eta_{AB} z^A_\m z^B_\nu.
\ee
Then its determinant provides a Lagrangian
\mar{s140}\beq
L=(\det h)^{1/2} d^2\si =([\eta_{AB} z^A_1 z^B_1] [\eta_{AB} z^A_2
z^B_2]- [\eta_{AB} z^A_1 z^B_2]^2 )^{1/2} d^2\si  \label{s140}
\eeq
on the jet manifold  $J^1Z_\Si$ (\ref{s30}). This is the well
known Nambu--Goto Lagrangian of classical string theory
\cite{polch}. It satisfies the Noether identities (\ref{s60}).
\end{ex}

\section{Relativistic mechanics}

As was mentioned above, if $n=1$, we are in the case of
relativistic mechanics. In this case, one can obtain a complete
solution of the Noether identities (\ref{s60}) which provides a
generic Lagrangian of relativistic mechanics.

Given an $m$-dimensional manifold $Q$ coordinated by $(q^\la)$,
let us consider the jet manifold $J^1_1Q$ of its one-dimensional
submanifolds. It is treated as a velocity space of relativistic
mechanics \cite{book10,book98,sard10}. Let us provide $Q=J^0_1Q$
with the coordinates (\ref{5.8}):
\mar{0303}\beq
(U;x^0=q^0, y^i=q^i)= (U;q^\la). \label{0303}
\eeq
Then the jet manifold $\rho:J^1_1Q\to Q$ is endowed with the
coordinates (\ref{5.31}):
\mar{0300}\beq
(\rho^{-1}(U);q^0,q^i,q^i_0) \label{0300}
\eeq
possessing transition functions (\ref{5.26}), (\ref{s2}):
\mar{s120,'}\ben
&&q'^0=q'^0(q^0,q^k), \qquad q'^0=q'^0(q^0,q^k), \label{s120}\\
&&q'^i_0= (\frac{\dr q'^i}{\dr q^j} q^j_0 + \frac{\dr q'^i}{\dr
q^0} ) (\frac{\dr q'^0}{\dr q^j} q^j_0 + \frac{\dr q'^0}{\dr q^0}
)^{-1}. \label{s120'}
\een
A glance at the transformation law (\ref{s120'}) shows that
$J^1_1Q\to Q$ is a fibre bundle in projective spaces.

\begin{ex} \label{0310} \mar{0310}
Let $Q=M^4=\Bbb R^4$ be a Minkowski space whose Cartesian
coordinates $(q^\la)$, $\la=0,1,2,3,$ are subject to the Lorentz
transformations (\ref{s120}):
\mar{s122}\beq
q'^0= q^0{\rm ch}\al - q^1{\rm sh}\al, \quad q'^1= -q^0{\rm sh}\al
+ q^1{\rm ch}\al, \quad q'^{2,3} = q^{2,3}. \label{s122}
\eeq
Then $q'^i$ (\ref{s120'}) are exactly the Lorentz transformations
\be
q'^1_0=\frac{ q^1_0{\rm ch}\al -{\rm sh}\al}{ - q^1_0{\rm sh}\al+
{\rm ch}\al} \qquad q'^{2,3}_0=\frac{q^{2,3}_0}{ - q^1_0{\rm
sh}\al + {\rm ch}\al}
\ee
of three-velocities in Special Relativity.
\end{ex}

In view of Example \ref{0310}, let us call a velocity space
$J^1_1Q$ of relativistic mechanics the space of three-velocities,
though a dimension of $Q$ need not equal $3+1$.

In order to develop Lagrangian formalism of relativistic
mechanics, let us consider the trivial fibre bundle (\ref{str}):
\mar{str'}\beq
Q_R=\Bbb R\times Q\to \Bbb R, \label{str'}
\eeq
whose base $\Si=\Bbb R$ is endowed with a global Cartesian
coordinate $\tau$. This fibre bundle is provided with an atlas of
coordinate charts
\mar{s20a}\beq
(\Bbb R\times U; \tau,q^\la), \label{s20a}
\eeq
where $(U; q^0,q^i)$ are the coordinate charts (\ref{0303}) of a
manifold $J^0_1Q$. The coordinate charts (\ref{s20a}) possess the
transition functions (\ref{s120}). Let $J^1Q_R$ be the first order
jet manifold of the fibre bundle (\ref{str'}). Since the
trivialization (\ref{str'}) is fixed, there is the canonical
isomorphism (\ref{s30}) of $J^1Q_R$ to the vertical tangent bundle
\mar{s30'}\beq
J^1Q_R= VQ_R= \Bbb R\times TQ \label{s30'}
\eeq
of $Q_R\to \Bbb R$.

Given the coordinate atlas (\ref{s20a}) of $Q_R$, a jet manifold
$J^1Q_R$ is endowed with the coordinate charts
\mar{s14'}\beq
 ((\pi^1)^{-1}(\Bbb R\times U)=\Bbb R\times U\times\Bbb R^m;
\tau,q^\la,q^\la_\tau), \label{s14'}
\eeq
possessing transition functions
\mar{s16a}\beq
q'^\la_\tau=\frac{\dr q'^\la}{\dr q^\m}q^\m_\tau. \label{s16a}
\eeq
Relative to the coordinates (\ref{s14'}), the isomorphism
(\ref{s30'}) takes the form
\mar{0305}\beq
(\tau,q^\m,q^\m_\tau) \to (\tau,q^\m,\dot q^\m=q^\m_\tau).
\label{0305}
\eeq

\begin{ex} \label{0311} \mar{0311} Let $Q=M^4$ be a Minkowski
space in Example \ref{0310} whose Cartesian coordinates
$(q^0,q^i)$ are subject to the Lorentz transformations
(\ref{s122}). Then the corresponding transformations (\ref{s16a})
take the form
\be
q'^0_\tau= q^0_\tau{\rm ch}\al - q^1_\tau{\rm sh}\al, \quad
q'^1_\tau= -q^0_\tau{\rm sh}\al + q^1_\tau{\rm ch}\al, \quad
q'^{2,3}_\tau = q^{2,3}_\tau
\ee
of transformations of four-velocities in Special Relativity.
\end{ex}

In view of Example \ref{0311}, we agree to call fibre elements of
$J^1Q_R\to Q_R$ the four-velocities though the dimension of $Q$
need not equal 4. Due to the canonical isomorphism $q^\la_\tau\to
\dot q^\la$ (\ref{s30'}), by four-velocities also are meant the
elements of the tangent bundle $TQ$, which is called the space of
four-velocities.

In accordance with the terminology of Section 3, the non-zero jet
(\ref{0305}) of sections of the fibre bundle (\ref{str'}) is
regular, and it defines some jet of one-dimensional submanifolds
of a manifold $\{\tau\}\times Q$ through a point $(q^0,q^i)\in Q$.
Although this is not one-to-one correspondence, just as in Section
4, one can describe the dynamics of one-dimensional submanifolds
of a manifold $Q$ as that of sections of the fibre bundle
(\ref{str'}).

Let us consider the manifold product $\Bbb R\times J^1_1Q$. It is
a fibre bundle over $Q_R$.  Given the coordinate atlas
(\ref{s20a}) of $Q_R$, this product is endowed with the coordinate
charts (\ref{s13}):
\mar{s13'}\beq
(U_R\times \rho^{-1}(U)=U_R\times U\times\Bbb R^{m-1};
\tau,q^0,q^i, q^i_0), \label{s13'}
\eeq
possessing transition functions (\ref{s120}) -- (\ref{s120'}). Let
us assign to an element $(\tau,q^0,q^i, q^i_0)$ of the chart
(\ref{s13'}) the elements $(\tau,q^0,q^i,q^0_\tau, q^i_\tau)$ of
the chart (\ref{s14'}) whose coordinates obey the relations
(\ref{s17}):
\mar{s17'}\beq
q^i_0 q^0_\tau = q^i_\tau. \label{s17'}
\eeq
These elements make up a one-dimensional vector space. The
relations (\ref{s17'}) are maintained under  coordinate
transformations (\ref{s120'}) and (\ref{s16a}). Thus, one can
associate to each element of the manifold $\Bbb R\times J^1_1Q$ a
one-dimensional vector space
\mar{s25}\beq
(\tau,q^0,q^i, q^i_0) \to \{(\tau,q^0,q^i,q^0_\tau, q^i_\tau) \, |
\, q^i_0 q^0_\tau = q^i_\tau\}, \label{s25}
\eeq
in a jet manifold $J^1Q_R$. This is a subspace of elements
$q^0_\tau (\dr_0 + q^i_0\dr_i)$ of a fibre of the vertical tangent
bundle (\ref{s30'}) at a point $(\tau,q^0,q^i)$. Conversely, given
a non-zero element (\ref{0305}) of $J^1Q_R$, there is a coordinate
chart (\ref{s14'}) such that this element defines a unique element
of $\Bbb R\times J^1_1Q$ by the relations (\ref{s31}):
\mar{s31'}\beq
q^i_0=\frac{q^i_\tau}{q^0_\tau}. \label{s31'}
\eeq

Thus, we come to Theorem \ref{s50} for the case $n=1$ as follows.
Let $(\tau,q^\la)$ further be arbitrary coordinates on the product
$Q_R$ (\ref{str'}) and $(\tau,q^\la,q^\la_\tau)$ the corresponding
coordinates on a jet manifold $J^1Q_R$.

\begin{theo} \label{s50'} \mar{s50'}
(i) Any jet of submanifolds through a point $q\in Q$ defines some
(but not unique) jet of sections of the fibre bundle $Q_R$
(\ref{str'}) through a point $\tau\times q$ for any $\tau\in \Bbb
R$ in accordance with the relations (\ref{s17'}).

(ii)  Any non-zero element of $J^1Q_R$ defines a unique element of
the jet manifold $J^1_1Q$ by means of the relations (\ref{s31'}).
However, non-zero elements of $J^1Q_R$ can correspond to different
jets of submanifolds.

(iii) Two elements $(\tau,q^\la,q^\la_\tau)$ and
$(\tau,q^\la,q'^\la_\tau)$ of $J^1Q_R$ correspond to the same jet
of submanifolds if $q'^\la_\tau=r q^\la_\tau$, $r\in\Bbb
R\setminus \{0\}$.
\end{theo}

In the case of a Minkowski space $Q=M^4$ in Examples \ref{0310}
and \ref{0311}, the equalities (\ref{s17'}) and (\ref{s31'}) are
the familiar relations between three- and four-velocities.

Let
\mar{s40}\beq
L=\cL(\tau,q^\la, q^\la_\tau) d\tau, \label{s40}
\eeq
be a first order Lagrangian on a jet manifold $J^1Q_R$. The
corresponding Lagrange operator reads
\mar{s41}\beq
\dl L= \cE_\la dq^\la\w d\tau, \qquad \cE_\la= \dr_\la\cL - d_\tau
\dr_\la^\tau\cL. \label{s41}
\eeq
Let us require that, in order to describe jets of one-dimensional
submanifolds of $Q$, the Lagrangian $L$ (\ref{s40}) on $J^1Q_R$
possesses a gauge symmetry given by vector fields
$u=u(\tau)\dr_\tau$ on $Q_R$ or, equivalently, their vertical part
\mar{kk33}\beq
u_V= - u(\tau) q^\la_\tau\dr_\la, \label{kk33}
\eeq
which are generalized vector fields on $Q_R$. Then the variational
derivatives of this Lagrangian obey the Noether identities
(\ref{s60}):
\mar{s60'}\beq
q^\la_\tau\cE_\la=0. \label{s60'}
\eeq
We call such a Lagrangian the relativistic Lagrangian.

In order to obtain a generic form of a relativistic Lagrangian
$L$, let us regard the Noether identities (\ref{s60'}) as an
equation for $L$. It admits the following solution. Let
\be
\frac1{2N!}G_{\al_1\ldots\al_{2N}}(q^\nu)dq^{\al_1}\vee\cdots\vee
dq^{\al_{2N}}
\ee
be a symmetric tensor field on $Q$ such that a function
\mar{kk6}\beq
G=G_{\al_1\ldots\al_{2N}}(q^\nu)\dot q^{\al_1}\cdots\dot
q^{\al_{2N}} \label{kk6}
\eeq
is positive ($G>0$) everywhere on $TQ\setminus \wh 0(Q)$, where
$\wh 0(Q)$ is the global zero section of $TQ\to Q$. Let
$A=A_\m(q^\nu) dq^\m$ be a one-form on $Q$. Given the pull-back of
$G$ and $A$ onto $J^1Q_R$ due to the canonical isomorphism
(\ref{s30'}), we define a Lagrangian
\mar{kk1}\beq
L=(G^{1/2N} + q^\m_\tau A_\m)d\tau, \qquad
G=G_{\al_1\ldots\al_{2N}}q^{\al_1}_\tau\cdots q^{\al_{2N}}_\tau,
\label{kk1}
\eeq
on $J^1Q_R\setminus (\Bbb R\times \wh 0(Q))$. The corresponding
Lagrange equations read
\mar{kk2,3}\ben
&&\cE_\la = \frac{\dr_\la G}{2NG^{1-1/2N}} - d_\tau\left(\frac{\dr_\la^\tau
G}{2NG^{1-1/2N}}\right)+ F_{\la\m}q^\m_\tau= \label{kk2}\\
&& \qquad E_\bt[\dl^\bt_\la -
 q^\bt_\tau G_{\la\nu_2\ldots\nu_{2N}}q^{\nu_2}_\tau\cdots
q^{\nu_{2N}}_\tau G^{-1}]G^{1/2N-1}=0, \nonumber\\
&& E_\bt= \left(\frac{\dr_\bt
G_{\m\al_2\ldots\al_{2N}}}{2N}- \dr_\m
G_{\bt\al_2\ldots\al_{2N}}\right) q^\m_\tau q^{\al_2}_\tau\cdots
q^{\al_{2N}}_\tau - \label{kk3}\\
&& \qquad (2N-1)G_{\bt\m\al_3\ldots\al_{2N}}q^\m_{\tau\tau}q^{\al_3}_\tau\cdots
q^{\al_{2N}}_\tau  + G^{1-1/2N}F_{\bt\m}q^\m_\tau,\nonumber\\
&& F_{\la\m}=\dr_\la A_\m-\dr_\m A_\la. \nonumber
\een
It is readily observed that the variational derivatives $\cE_\la$
(\ref{kk2}) satisfy the Noether identities (\ref{s60'}). Moreover,
any relativistic Lagrangian obeying the Noether identity
(\ref{s60'}) is of type (\ref{kk1}).

A glance at the Lagrange equations (\ref{kk2}) shows that they
hold if
\mar{kk5}\beq
E_\bt= \F G_{\bt\nu_2\ldots\nu_{2N}}q^{\nu_2}_\tau\cdots
q^{\nu_{2N}}_\tau G^{-1}, \label{kk5}
\eeq
where $\F$ is some function on $J^1Q_R$. In particular, we
consider the equations
\mar{kk4}\beq
E_\bt=0. \label{kk4}
\eeq

Because of the Noether identities (\ref{s60'}), the system of
equations (\ref{kk2}) is underdetermined. To overcome this
difficulty, one can complete it with some additional equation.
Given the function $G$ (\ref{kk1}), let us choose the condition
\mar{kk8}\beq
G=1. \label{kk8}
\eeq
Being positive, the function $G$ (\ref{kk1}) possesses a nowhere
vanishing differential. Therefore, its level surface $W_G$ defined
by the condition (\ref{kk8}) is a submanifold of $J^1Q_R$.

Our choice of the equations (\ref{kk4}) and the condition
(\ref{kk8}) is motivated by the following facts.

\begin{lem} \label{kk12} \mar{kk12} Any solution of the
Lagrange equations (\ref{kk2}) living in the submanifold $W_G$ is
a solution of the equation (\ref{kk4}).
\end{lem}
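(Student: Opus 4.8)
The claim is that on the constraint surface $W_G=\{G=1\}$, the full Lagrange equations $\cE_\la=0$ of (\ref{kk2}) reduce to the equations $E_\bt=0$ of (\ref{kk4}). The starting point is the second displayed form of $\cE_\la$ in (\ref{kk2}), namely
\be
\cE_\la = E_\bt\bigl[\dl^\bt_\la - q^\bt_\tau\, G_{\la\nu_2\ldots\nu_{2N}}q^{\nu_2}_\tau\cdots q^{\nu_{2N}}_\tau\, G^{-1}\bigr]\,G^{1/2N-1}.
\ee
So $\cE_\la$ is obtained from $E_\bt$ by applying the (fibrewise) linear operator $P^\bt_\la := \dl^\bt_\la - q^\bt_\tau\,G_{\la\nu_2\ldots\nu_{2N}}q^{\nu_2}_\tau\cdots q^{\nu_{2N}}_\tau\,G^{-1}$ and multiplying by the nowhere-zero scalar $G^{1/2N-1}$ (recall $G>0$ on the domain of $L$). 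Hence $E_\bt=0$ trivially implies $\cE_\la=0$; the content of the lemma is the converse on $W_G$. First I would observe that $P$ is a projector: since $G_{\la\nu_2\ldots\nu_{2N}}q^{\la}_\tau q^{\nu_2}_\tau\cdots q^{\nu_{2N}}_\tau = G$ (this is just the definition (\ref{kk1}) of $G$, using total symmetry of $G_{\al_1\ldots\al_{2N}}$), one computes $P^\bt_\la q^\la_\tau = q^\bt_\tau - q^\bt_\tau G\cdot G^{-1} = 0$, so $q^\la_\tau$ spans the kernel direction, and $P$ has rank $m-1$ with image complementary to $\mathbb{R}q_\tau$.

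Now suppose $j^1s$ is a solution of the Lagrange equations lying in $W_G$. Then $G=1$ along the solution, and $\cE_\la = E_\bt P^\bt_\la = 0$, i.e. the covector $E_\bt$ is annihilated by the transpose of $P$. Since $P$ is a projector onto a hyperplane along $q_\tau$, its transpose $P^\bt_\la$ kills exactly the covectors proportional to the ``metric-dual'' covector $n_\la := G_{\la\nu_2\ldots\nu_{2N}}q^{\nu_2}_\tau\cdots q^{\nu_{2N}}_\tau$ (this is the one-dimensional cokernel). Therefore $\cE_\la=0$ forces
\be
E_\bt = \F\, n_\bt = \F\, G_{\bt\nu_2\ldots\nu_{2N}}q^{\nu_2}_\tau\cdots q^{\nu_{2N}}_\tau
\ee
for some scalar function $\F$ on $J^1Q_R$ — which is precisely (\ref{kk5}). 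It remains to show that the extra condition $G=1$ (i.e. living in $W_G$) forces $\F=0$. For this I would contract (\ref{kk5}) with $q^\bt_\tau$: the right side gives $\F\,G = \F$ on $W_G$, while the left side, $E_\bt q^\bt_\tau$, I claim vanishes identically. Indeed $q^\bt_\tau \cE_\bt = 0$ is exactly the Noether identity (\ref{s60'}) guaranteed by the second Noether theorem (the relativistic Lagrangian (\ref{kk1}) has the gauge symmetry (\ref{kk33}) by construction); but $q^\bt_\tau\cE_\bt = (q^\bt_\tau E_\g P^\g_\bt)G^{1/2N-1}$, and since $P^\g_\bt q^\bt_\tau = 0$ this is automatic and gives no information. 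So instead I would compute $q^\bt_\tau E_\bt$ directly from the definition (\ref{kk3}) of $E_\bt$.

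**The main obstacle.** The crux is therefore the identity $q^\bt_\tau E_\bt = 0$ on $W_G$ (or perhaps up to a term proportional to $d_\tau G$, which vanishes on a solution staying in $W_G$). This is a direct but slightly delicate computation: contracting (\ref{kk3}) with $q^\bt_\tau$, the first two terms combine — using $\dr_\m(G_{\bt\al_2\ldots\al_{2N}}q^\bt_\tau q^{\al_2}_\tau\cdots q^{\al_{2N}}_\tau) = \dr_\m G$ and the antisymmetrized-derivative structure — to give essentially $\tfrac12 q^\m_\tau\dr_\m G - q^\m_\tau\dr_\m G + \ldots$, the third term $-(2N-1)G_{\bt\m\ldots}q^\bt_\tau q^\m_{\tau\tau}q^{\al_3}_\tau\cdots$ assembles into $-\tfrac12\,d_\tau$-type derivatives of $G$ (since $d_\tau G = \dr_\m G\,q^\m_\tau + 2N\,G_{\m\al_2\ldots}q^\m_{\tau\tau}q^{\al_2}_\tau\cdots q^{\al_{2N}}_\tau$), and the last term $G^{1-1/2N}F_{\bt\m}q^\bt_\tau q^\m_\tau$ vanishes by antisymmetry of $F$. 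Collecting, $q^\bt_\tau E_\bt$ should be a multiple of $d_\tau G$, which is zero along any solution confined to $W_G=\{G=1\}$. Hence $\F\cdot 1 = q^\bt_\tau E_\bt = 0$, so $\F=0$ and $E_\bt=0$, completing the proof. The only thing to be careful about is bookkeeping the combinatorial factors $2N$ and $2N-1$ so that the $\dr_\m G$ and the $G_{\m\al_2\ldots}q^\m_{\tau\tau}\cdots$ pieces recombine exactly into $d_\tau G$ with the right coefficient; everything else is formal.
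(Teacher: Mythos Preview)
Your argument is correct, but it takes a more elaborate route than the paper. The paper's proof works with the \emph{first} displayed form of $\cE_\la$ in (\ref{kk2}),
\be
\cE_\la = \frac{\dr_\la G}{2NG^{1-1/2N}} - d_\tau\!\left(\frac{\dr_\la^\tau G}{2NG^{1-1/2N}}\right)+ F_{\la\m}q^\m_\tau,
\ee
and simply observes that on a solution confined to $W_G$ one has $G=1$ and hence $d_\tau G=0$; with $G^{1-1/2N}=1$ and $d_\tau(G^{1-1/2N})=0$ this expression collapses term-by-term to the expression $E_\la$ of (\ref{kk3}), so $\cE_\la=0$ is literally the same equation as $E_\la=0$. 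No projector analysis is needed.

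Your approach instead starts from the second (projector) form, identifies the rank-$(m-1)$ operator $P^\bt_\la$, deduces that $\cE_\la=0$ forces $E_\bt=\F\, n_\bt$ (which is exactly the paper's (\ref{kk5})), and then contracts with $q^\bt_\tau$ to kill $\F$ via the identity $q^\bt_\tau E_\bt = -\tfrac{2N-1}{2N}\,d_\tau G$. That identity is precisely what the paper uses to prove the \emph{next} lemma (Lemma \ref{kk12'}), so your argument effectively merges the two lemmas into one computation. The payoff of your route is structural clarity --- you see explicitly why the Lagrange equations determine $E_\bt$ only up to the one-dimensional cokernel of $P$, and why the constraint $G=1$ supplies the missing direction --- at the cost of a longer computation than the paper's one-line ``a glance at (\ref{kk2})''.
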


\begin{proof}
A solution of the Lagrange equations (\ref{kk2}) living in the
submanifold $W_G$ obeys the system of equations
\mar{kk13}\beq
\cE_\la=0, \qquad G=1. \label{kk13}
\eeq
Therefore, it satisfies the equality
\mar{kk16}\beq
d_\tau G=0. \label{kk16}
\eeq
Then a glance at the expression (\ref{kk2}) shows that the
equations (\ref{kk13}) are equivalent to the equations
\mar{kk14}\ben
&& E_\la= \left(\frac{\dr_\la
G_{\m\al_2\ldots\al_{2N}}}{2N}- \dr_\m
G_{\la\al_2\ldots\al_{2N}}\right) q^\m_\tau q^{\al_2}_\tau\cdots
q^{\al_{2N}}_\tau - \nonumber\\
&& \qquad (2N-1)G_{\bt\m\al_3\ldots\al_{2N}}q^\m_{\tau\tau}q^{\al_3}_\tau\cdots
q^{\al_{2N}}_\tau  + F_{\bt\m}q^\m_\tau =0, \label{kk14}\\
&& G=G_{\al_1\ldots\al_{2N}}q^{\al_1}_\tau\cdots q^{\al_{2N}}_\tau=1.
\nonumber
\een
\end{proof}

\begin{lem} \label{kk12'} \mar{kk12'}
Solutions of the equations (\ref{kk4}) do not leave the
submanifold $W_G$ (\ref{kk8}).
\end{lem}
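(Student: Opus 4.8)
The claim is that the constraint surface $W_G = \{G=1\}$ is preserved by the flow of the equations $E_\beta = 0$, i.e. that solutions of (\ref{kk4}) cannot escape $W_G$. Since (\ref{kk4}) is a second-order ODE system for curves $\tau\mapsto q^\la(\tau)$, the natural strategy is to show that the function $G$, evaluated along a solution of $E_\beta=0$, satisfies a first-order ODE in $\tau$ of the form $d_\tau G = f(\tau) G$ (or more crudely, that $d_\tau G$ vanishes whenever $G$ attains the value $1$ together with the right auxiliary data) — so that $G\equiv\text{const}$ along the solution, and in particular $G=1$ is an invariant value once it is attained. First I would contract the equations $E_\beta=0$ (\ref{kk3}) with $q^\beta_\tau$ and compute $q^\beta_\tau E_\beta$.

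**The key computation.** Carrying out the contraction of (\ref{kk3}) with $q^\beta_\tau$: the term $\left(\frac{\dr_\bt G_{\m\al_2\ldots\al_{2N}}}{2N}-\dr_\m G_{\bt\al_2\ldots\al_{2N}}\right)q^\m_\tau q^{\al_2}_\tau\cdots q^{\al_{2N}}_\tau q^\bt_\tau$ becomes $\frac1{2N}(\dr_\bt G_{\m\al_2\ldots\al_{2N}})q^\bt_\tau q^\m_\tau q^{\al_2}_\tau\cdots q^{\al_{2N}}_\tau - (\dr_\m G_{\bt\al_2\ldots\al_{2N}})q^\m_\tau q^\bt_\tau q^{\al_2}_\tau\cdots q^{\al_{2N}}_\tau$; by the total symmetry of $G_{\al_1\ldots\al_{2N}}$ both sums equal $\frac1{2N}(\dr_\nu G_{\al_1\ldots\al_{2N}})q^\nu_\tau q^{\al_1}_\tau\cdots q^{\al_{2N}}_\tau$ up to relabeling, so this piece is $\frac1{2N}(\dr_\nu G_{\al_1\ldots\al_{2N}})q^\nu_\tau q^{\al_1}_\tau\cdots q^{\al_{2N}}_\tau - (\dr_\nu G_{\al_1\ldots\al_{2N}})q^\nu_\tau q^{\al_1}_\tau\cdots q^{\al_{2N}}_\tau = -\frac{2N-1}{2N}(\dr_\nu G_{\al_1\ldots\al_{2N}})q^\nu_\tau q^{\al_1}_\tau\cdots q^{\al_{2N}}_\tau$. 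The second term contracts to $-(2N-1)G_{\bt\m\al_3\ldots\al_{2N}}q^\bt_\tau q^\m_{\tau\tau}q^{\al_3}_\tau\cdots q^{\al_{2N}}_\tau$, which is $-\frac{2N-1}{2N}$ times the $q^\m_{\tau\tau}$-part of $d_\tau G$. Finally $G^{1-1/2N}F_{\bt\m}q^\m_\tau q^\bt_\tau = 0$ by antisymmetry of $F$. Recognizing that $d_\tau G = (\dr_\nu G_{\al_1\ldots\al_{2N}})q^\nu_\tau q^{\al_1}_\tau\cdots q^{\al_{2N}}_\tau + 2N\,G_{\m\al_2\ldots\al_{2N}}q^\m_{\tau\tau}q^{\al_2}_\tau\cdots q^{\al_{2N}}_\tau$, I get $q^\bt_\tau E_\bt = -\frac{2N-1}{2N}\,d_\tau G$.

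**Conclusion from the computation.** Therefore, along any solution of $E_\bt=0$ one has $d_\tau G = 0$, so $G$ is constant along the solution. Hence if the solution meets $W_G$ at one instant — i.e. $G=1$ there — then $G=1$ for all $\tau$, which is exactly the assertion that solutions of (\ref{kk4}) do not leave $W_G$. I would present this as: (1) state that it suffices to show $d_\tau G=0$ on solutions of (\ref{kk4}); (2) write $q^\bt_\tau E_\bt$ and simplify using the total symmetry of $G_{\al_1\ldots\al_{2N}}$, the antisymmetry of $F_{\bt\m}$, and the identity for $d_\tau G$ above; (3) obtain $q^\bt_\tau E_\bt = -\tfrac{2N-1}{2N}d_\tau G$, so $E_\bt=0\Rightarrow d_\tau G=0\Rightarrow G=\text{const}$.

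**Main obstacle.** There is no deep obstacle here; the proof is a bookkeeping computation. The one place to be careful is the symmetrization accounting in the first term of $E_\bt$ — making sure that contracting the "$\frac1{2N}\dr_\bt G_{\m\ldots} - \dr_\m G_{\bt\ldots}$" combination with $q^\bt_\tau q^\m_\tau\cdots$ collapses correctly (both terms become proportional to the single symmetric object $(\dr_\nu G_{\al_1\ldots\al_{2N}})q^\nu_\tau q^{\al_1}_\tau\cdots q^{\al_{2N}}_\tau$ because every index sits on a $q_\tau$), yielding the coefficient $-\frac{2N-1}{2N}$ that matches the $\dr$-part of $d_\tau G$; and likewise checking that the $G_{\bt\m\ldots}q^\m_{\tau\tau}$ term supplies exactly the $-\frac{2N-1}{2N}$ multiple of the $q_{\tau\tau}$-part of $d_\tau G$. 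Once those two coefficients are seen to agree, the proportionality $q^\bt_\tau E_\bt = -\frac{2N-1}{2N}d_\tau G$ follows and the lemma is immediate. I would keep the write-up short, displaying only the contracted identity and the resulting $d_\tau G=0$.

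\begin{proof}
It suffices to show that $d_\tau G=0$ along any solution of the equations $E_\bt=0$ (\ref{kk4}); then $G$ is constant along such a solution, and if the solution meets $W_G$ (\ref{kk8}) at one value of $\tau$, it stays in $W_G$.

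Contract the expression (\ref{kk3}) for $E_\bt$ with $q^\bt_\tau$. Since $G_{\al_1\ldots\al_{2N}}$ is totally symmetric and every index of the first term is contracted with a factor $q_\tau$, one has
\be
\left(\frac{\dr_\bt G_{\m\al_2\ldots\al_{2N}}}{2N}-\dr_\m
G_{\bt\al_2\ldots\al_{2N}}\right)q^\bt_\tau q^\m_\tau q^{\al_2}_\tau\cdots q^{\al_{2N}}_\tau
= -\frac{2N-1}{2N}\,(\dr_\nu G_{\al_1\ldots\al_{2N}})q^\nu_\tau q^{\al_1}_\tau\cdots q^{\al_{2N}}_\tau .
\ee
The second term of (\ref{kk3}) contracts to
\be
-(2N-1)\,G_{\bt\m\al_3\ldots\al_{2N}}q^\bt_\tau q^\m_{\tau\tau}q^{\al_3}_\tau\cdots q^{\al_{2N}}_\tau
= -\frac{2N-1}{2N}\cdot 2N\,G_{\m\al_2\ldots\al_{2N}}q^\m_{\tau\tau}q^{\al_2}_\tau\cdots q^{\al_{2N}}_\tau ,
\ee
and the last term vanishes because $F_{\bt\m}q^\m_\tau q^\bt_\tau=0$ by antisymmetry of $F$. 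Adding these and using
\be
d_\tau G= (\dr_\nu G_{\al_1\ldots\al_{2N}})q^\nu_\tau q^{\al_1}_\tau\cdots q^{\al_{2N}}_\tau
+ 2N\,G_{\m\al_2\ldots\al_{2N}}q^\m_{\tau\tau}q^{\al_2}_\tau\cdots q^{\al_{2N}}_\tau ,
\ee
we obtain the identity
\mar{kk20}\beq
q^\bt_\tau E_\bt = -\frac{2N-1}{2N}\,d_\tau G. \label{kk20}
\eeq
Hence $E_\bt=0$ implies $d_\tau G=0$, i.e. $G$ is constant along the solution, so solutions of (\ref{kk4}) do not leave the submanifold $W_G$ (\ref{kk8}).
\end{proof}
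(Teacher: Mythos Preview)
Your proof is correct and follows exactly the paper's approach: the paper's proof simply asserts the identity $d_\tau G = -\frac{2N}{2N-1}\,q^\bt_\tau E_\bt$ and concludes, while you supply the termwise verification of that identity (using the total symmetry of $G_{\al_1\ldots\al_{2N}}$ and the antisymmetry of $F$) before drawing the same conclusion. One cosmetic point: your displayed identity carries the label \texttt{kk20}, which already exists in the paper; rename it to avoid a duplicate-label error on compilation.
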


\begin{proof}
Since
\be
d_\tau G= -\frac{2N}{2N-1}q^\bt_\tau E_\bt,
\ee
any solution of the equations (\ref{kk4}) intersecting the
submanifold $W_G$ (\ref{kk8}) obeys the equality (\ref{kk16}) and,
consequently, lives in $W_G$.
\end{proof}

The system of equations (\ref{kk14}) is called the relativistic
equation. Its components $E_\la$ (\ref{kk3}) are not independent,
but obeys the relation
\be
q^\bt_\tau E_\bt=- \frac{2N-1}{2N}d_\tau G=0, \qquad G=1,
\ee
similar to the Noether identities (\ref{s60'}). The condition
(\ref{kk8}) is called the relativistic constraint.

Though the system of equations (\ref{kk2}) for sections of a fibre
bundle $Q_R\to\Bbb R$ is underdetermined, it is determined if,
given a coordinate chart $(U;q^0,q^i)$ (\ref{0303}) of $Q$ and the
corresponding coordinate chart (\ref{s20a}) of $Q_R$, we rewrite
it in the terms of three-velocities $q^i_0$ (\ref{s31'}) as
equations for sections of a fibre bundle $U\to\chi(U)\subset\Bbb
R$.

Let us denote
\mar{kk60}\beq
\ol G(q^\la,q^i_0)=(q^0_\tau)^{-2N}G(q^\la,q^\la_\tau), \qquad
q^0_\tau\neq 0. \label{kk60}
\eeq
Then we have
\be
\cE_i=q^0_\tau\left[\frac{\dr_i \ol G}{2N\ol G^{1-1/2N}} -
(q^0_\tau)^{-1}d_\tau\left(\frac{\dr_i^0 \ol G}{2N\ol
G^{1-1/2N}}\right)+ F_{ij}q^j_0 + F_{i0}\right].
\ee
Let us consider a solution $\{s^\la(\tau)\}$ of the equations
(\ref{kk2}) such that $\dr_\tau s^0$ does not vanish and there
exists an inverse function $\tau(q^0)$. Then this solution can be
represented by sections
\mar{kk51}\beq
s^i(\tau)=(\ol s^i\circ s^0)(\tau) \label{kk51}
\eeq
of the composite bundle
\be
\Bbb R\times U\to \Bbb R\times \p(U)\to \Bbb R
\ee
where $\ol s^i(q^0)=s^i(\tau(q^0))$ are sections of $U\to \chi(U)$
and $s^0(\tau)$ are sections of $\Bbb R\times \p(U)\to \Bbb R$.
Restricted to such solutions, the equations (\ref{kk2}) are
equivalent to the equations
\mar{kk20}\ben
&& \ol\cE_i=\frac{\dr_i \ol G}{2N\ol G^{1-1/2N}} -
d_0\left(\frac{\dr_i^0 \ol G}{2N\ol G^{1-1/2N}}\right)+
F_{ij}q^j_0 + F_{i0}=0,
\label{kk20} \\
&& \ol\cE_0=-q^i_0\ol\cE_i. \nonumber
\een
for sections $\ol s^i(q^0)$ of a fibre bundle $U\to\chi(U)$.

It is readily observed that the equations (\ref{kk20}) are the
Lagrange equation of the Lagrangian
\mar{kk21}\beq
\ol L=(\ol G^{1/2N} + q^i_0 A_i + A_0)dq^0 \label{kk21}
\eeq
on the jet manifold $J^1U$ of a fibre bundle $U\to\chi(U)$.

It should be emphasized that, both the equations (\ref{kk20}) and
the Lagrangian (\ref{kk21}) are defined only on a coordinate chart
(\ref{0303}) of $Q$ since they are not maintained under the
transition functions (\ref{s120}) -- (\ref{s120'}).

A solution $\ol s^i(q^0)$ of the equations (\ref{kk20}) defines a
solution $s^\la(\tau)$ (\ref{kk51}) of the equations (\ref{kk2})
up to an arbitrary function $s^0(\tau)$. The relativistic
constraint (\ref{kk8}) enables one to overcome this ambiguity as
follows.

Let us assume that, restricted to the coordinate chart
$(U;q^0,q^i)$ (\ref{0303}) of $Q$, the relativistic constraint
(\ref{kk8}) has no solution $q^0_\tau=0$. Then it is brought into
the form
\mar{kk62}\beq
(q^0_\tau)^{2N}\ol G(q^\la,q^i_0)=1, \label{kk62}
\eeq
where $\ol G$ is the function (\ref{kk60}). With the condition
(\ref{kk62}), every three-velocity $(q^i_0)$ defines a unique pair
of four-velocities
\mar{kk63}\beq
q^0_\tau = \pm (\ol G(q^\la,q^i_0))^{1/2N}, \qquad
q^i_\tau=q_\tau^0q^i_0. \label{kk63}
\eeq
Accordingly, any solution $\ol s^i(q^0)$ of the equations
(\ref{kk20}) leads to solutions
\be
\tau(q^0)=\pm\int (\ol G(q^0,\ol s^i(q^0),\dr_0\ol
s^i(q_0))^{-1/2N}dq^0, \quad s^i(\tau)=s^0(\tau)(\dr_i\ol
s^i)(s^0(\tau))
\ee
of the equations (\ref{kk13}) and, equivalently, the relativistic
equations (\ref{kk14}).

\begin{ex} \label{0313} \mar{0313}
Let $Q=M^4$ be a Minkowski space provided with the Minkowski
metric $\eta_{\m\nu}$ of signature $(+,---)$. This is the case of
Special Relativity. Let $\cA=\cA_\la dq^\la$ be a one-form on $Q$.
Then
\mar{s133}\beq
L=[m(\eta_{\m\nu}q^\m_\tau q^\nu_\tau)^{1/2} +e \cA_\m
q^\m_\tau]d\tau, \qquad m,e\in\Bbb R,\label{s133}
\eeq
is a relativistic Lagrangian on $J^1Q_R$ which satisfies the
Noether identity (\ref{s60'}). The corresponding relativistic
equation (\ref{kk14}) reads
\mar{x4,'}\ben
&& m\eta_{\mu\nu}q^\nu_{\tau\tau} -eF_{\m\nu}q^\nu_\tau=0,
\label{x4}\\
&& \eta_{\m\nu}q^\m_\tau q^\nu_\tau=1.  \label{x4'}
\een
This describes a relativistic massive charge in the presence of an
electromagnetic or gauge potential $\cA$. It follows from the
relativistic constraint (\ref{x4'}) that $(q^0_\tau)^2\geq 1$.
Therefore, passing to three-velocities, we obtain the Lagrangian
(\ref{kk21}):
\be
\ol L=\left[m(1-\op\sum_i (q^i_0)^2)^{1/2} +e (\cA_i q^i_0
+\cA_0)\right]dq^0,
\ee
and the Lagrange equations (\ref{kk20}):
\be
d_0\left(\frac{mq^i_0}{(1-\op\sum_i (q^i_0)^2)^{1/2}}\right)
+e(F_{ij}q^j_0 + F_{i0})=0.
\ee
\end{ex}

\end{document}